\numberwithin{equation}{section} 
\numberwithin{figure}{section} 
\theoremstyle{plain}
\newtheorem{thm}{Theorem}[section]
\newtheorem{lem}[thm]{Lemma} 
\theoremstyle{definition}
\theoremstyle{remark}
\newtheorem*{rems}{Remarks}
\newcommand{\ip}[1]{\left\langle #1 \right \rangle}
\newcommand{\norm}[1]{\left\Vert#1\right\Vert}
\newcommand{\abs}[1]{\left\vert#1\right\vert}
\newcommand{\set}[1]{\left\{#1\right\}}
\newcommand{\cu}[1]{\mathcal{#1}}
\def\e{\mathrm e}
\def\im{\mathrm i}
\def\Im{\mathrm{Im}}
\def\1{{\mathsf 1}}
\def\di{\mathrm d}
\def\R{\mathbb R}
\def\C{\mathbb C}
\def\ra{\rightarrow}
\def\ran{\operatorname{ran}}
\def\dim{\operatorname{dim}}  
\def\num{\operatorname{num}}
\def\cl{\operatorname{cl}}
\def\cch{\operatorname{cch}}
\def\Re{\operatorname{Re}}
\def\Im{\operatorname{Im}}
\begin{document}

\title[Estimating complex spectrum via complex dilations]{Estimating 
complex eigenvalues of non-self adjoint Schr\"odinger operators via complex dilations}%
\author{Jeffrey H. Schenker}%
\address{Department of Mathematics, Michigan State University, East Lansing MI 48823, USA}%
\email{jeffrey@math.msu.edu}%


\begin{abstract}
The phenomenon ``hypo-coercivity,'' i.e., the increased rate of contraction for a semi-group upon adding a large skew-adjoint part to the generator, is considered for $1$D semigroups generated by the Schr\"odinger operators $-\partial_{x}^{2}+x^{2 }+\im \gamma f(x)$ with a complex potential.  For $f$ of the special form $f(x)=1/(1+|x|^{\kappa})$, it is shown using complex dilations that the real part of eigenvalues of the operator are larger than a constant times $ |\gamma|^{2/(\kappa+2)}$.
\end{abstract}
\maketitle
\section{Introduction}
The purpose of this note is to take up up the following  problem suggested by C.\ Villani \cite{Villani} in 
the proceedings of the last ICM, quoted here (with notation slightly
changed):
\begin{quotation}
  Identify sufficient conditions on $f: \R \ra \R$ so that the real
  parts of the eigenvalues of
  $$ L_\gamma : \psi \ \mapsto \ ( - \partial_x^2 \psi + x^2 \psi -
  \psi) + \im \gamma f \psi $$
  as on operator on $L^2$ go to infinity as $|\gamma| \ra \infty$
  and estimate the rate of divergence.
\end{quotation}
The stated problem is a model  for the phenomenon of "hypocoercivity," with the specific form  motivated by applications described in \cite{Villani} and outlined briefly below.

It turns out that a sufficient condition on $f$ for divergence of the real parts of the eigenvalues may be obtained using a general result due to Constantin, Kiselev, Ryzhik and Zlato\v{s} \cite[Theorem 1.4]{CKRZ}.  Specifically, we will show below that:
\begin{thm}\label{thm:divergence} The real parts of the eigenvalues of $L_{\gamma}$ diverge to $\infty$ as $\gamma \ra \infty$ if $\{ x : f(x) =t\}$ is essentially nowhere dense for each $t \in \R$. 
\end{thm}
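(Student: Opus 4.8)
\medskip
\noindent\textbf{Proof proposal.} The plan is to view $L_\gamma$ as $H+\im\gamma f$, where $H:=-\partial_x^2+x^2-1$ is the harmonic oscillator shifted by its ground state energy and $f$ acts by multiplication; taking $f$ bounded, $\im\gamma f$ is a bounded self-adjoint perturbation of $H$, so $L_\gamma$ has the form domain of $H$, a compact resolvent, and discrete spectrum lying in $\{\Re z\ge 0\}$. The operator $H$ is non-negative and self-adjoint with eigenvalues $0,2,4,\dots$, its kernel being the line $\C\phi_0$ spanned by the Gaussian ground state. I would then invoke the abstract relaxation-enhancement theorem of Constantin, Kiselev, Ryzhik and Zlato\v{s} \cite[Theorem~1.4]{CKRZ}, according to which $\norm{e^{-\tau L_\gamma}}\to 0$ as $\gamma\to\infty$, for each fixed $\tau>0$, precisely when the self-adjoint operator of multiplication by $f$ has no eigenfunction lying in the form domain $\mathcal Q(H)=D(H^{1/2})$. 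Since any eigenvalue $\lambda$ of $L_\gamma$ satisfies $e^{-\tau\Re\lambda}\le\norm{e^{-\tau L_\gamma}}$, such semigroup decay forces $\inf\Re\sigma(L_\gamma)\to\infty$, so the theorem reduces to checking the eigenfunction condition on $f$.

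For that, recall $\mathcal Q(H)=\{\psi\in L^2(\R):\psi'\in L^2(\R),\ x\psi\in L^2(\R)\}\subset H^1(\R)$, so every element of $\mathcal Q(H)$ has a continuous representative. An eigenfunction of multiplication by $f$ at eigenvalue $t$ is a nonzero $\psi\in L^2(\R)$ with $(f(x)-t)\psi(x)=0$ for a.e.\ $x$; equivalently $\psi$ vanishes almost everywhere off the level set $E_t:=\{x:f(x)=t\}$. If such a $\psi$ also lay in $\mathcal Q(H)$, its continuous representative would be nonzero throughout some open interval $I$, and since $\psi$ vanishes a.e.\ off $E_t$ this would force $|I\setminus E_t|=0$; but then $I$ is contained in the essential closure of $E_t$, so $E_t$ is not essentially nowhere dense. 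Taking the contrapositive, the hypothesis that $E_t$ is essentially nowhere dense for every $t\in\R$ rules out all form-domain eigenfunctions of $f$, and \cite[Theorem~1.4]{CKRZ} then gives the asserted divergence.

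I expect the measure-theoretic step --- a nonzero $H^1$ function is nonzero throughout some open interval, so the level set carrying it contains an interval modulo a null set and is hence essentially somewhere dense --- to be routine. The delicate part is the clean application of \cite[Theorem~1.4]{CKRZ}, which requires two matching conditions. First, that theorem asks the unitary group $t\mapsto e^{\im tf}$ (multiplication by $e^{\im tf(x)}$) to preserve $\mathcal Q(H)$ with operator norm locally bounded in $t$; this holds once $f$ has a bounded derivative, which covers $f(x)=1/(1+|x|^\kappa)$ and which I would impose as a standing hypothesis. Second, $H$ has the nontrivial kernel $\C\phi_0$, which in the fluid-flow setting of \cite{CKRZ} is invariant under the mixing operator but here is not --- multiplication by $f$ fixes $\phi_0$ only if $f$ is a.e.\ constant, which is excluded since a constant $f$ makes $E_t=\R$ for some $t$ --- so I would need to record that, $\ker H$ not being $f$-invariant, the relaxation enhancement holds on all of $L^2(\R)$ rather than only on the orthogonal complement of the kernel. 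Verifying these two points, rather than the level-set computation, is where the real work lies.
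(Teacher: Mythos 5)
Your proposal follows essentially the same route as the paper: invoke the Constantin--Kiselev--Ryzhik--Zlato\v{s} relaxation-enhancement theorem, convert the resulting semigroup decay into divergence of $\inf \Re \sigma(L_\gamma)$ via $e^{-\tau \Re \lambda} \le \| e^{-\tau L_\gamma}\|$, and rule out form-domain eigenfunctions of multiplication by $f$ using $\mathcal{Q}\subset H^1(\R)\subset C(\R)$ together with the essentially-nowhere-dense hypothesis on the level sets. The one difference is that the paper applies the theorem to the strictly positive operator $H_0=-\partial_x^2+x^2$ (the constant $-1$ only contributes a harmless factor $e^{\tau}$ to the semigroup), so the kernel issue you single out as ``the real work'' simply does not arise; your other caveat, that the hypothesis $\langle e^{\im t f}\psi, H_0 e^{\im t f}\psi\rangle \le B(t)\langle \psi, H_0\psi\rangle$ requires some control on $f'$, is a legitimate point that the paper passes over in silence.
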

\begin{rems} 1) A set $S$ is \emph{essentially nowhere dense} if $S = S' \cup N$ where $S'$ is no-where dense and $N$ has Lebesgue measure zero.  2) Since the operators $L_\gamma$ are unbounded, we should specify
their domains.  Throughout we will consider bounded $f$, so the natural choice is to consider $L_\gamma$ on the domain of self-adjointness for the real part $L_0$:
$$\cu D = \set{ \psi : \int |\partial_x \psi|^2 + x^2 |\psi|^2 <
\infty}.$$  It is then not hard to show that $L_{\gamma}$ is the generator of a contractive semigroup and that $L_\gamma^{-1}$ is compact. Thus the spectrum $\sigma(L_\gamma)$ consists only of eigenvalues.
2) In fact the ``if'' of the theorem could be replaced by an ``if an only if'' \cite{Z} -- a proof of the reverse implication would follow, for example, from the method used recently to derive a similar result in the context of diffusion with drift on a compact manifold \cite{FHPS}.
\end{rems}

However, the question of estimating the \emph{rate} of divergence seems to be much more delicate.  In particular, it is likely that the rate will depend on the specific form of $f$ and thus we may not expect such a general result.  Nonetheless, as we will show below, a number of interesting examples with analytic $f$ may be analyzed with the technique of complex dilations, yielding, in one case at least, a very good estimate on the rate. 

The method of complex dilations, as applied below, essentially require that $f$ be analytic, possibly with a branch point. Furthermore, as the proof will show, the asymptotic growth of eigenvalues depends essentially on $f$ only in the neighborhood of a critical point.  If the potential has multiple critical points one may expect each of these to contribute.  For the sake of clarity, we will not try to formulate a very general result, but rather focus on a function $f$ with a single critical point to see how the behavior of the spectrum depends on the behavior of $f$ near the critical point.  Specifically, we will consider
\begin{equation}f(x)= \frac{1}{1+|x|^{\kappa}} \label{eq:formof}\end{equation}
for arbitrary $\kappa >0$.
 The main result of this paper is
\begin{thm} \label{thm:main} Fix $\kappa >0$ and let $f(x)$ be as in \eqref{eq:formof}. Then there is a positive
  constant $C_{\kappa}$ such that all eigenvalues of $L_\gamma$ lie in
  the half plane $$\set{z \ : \ \Re z \ge C_{\kappa} |\gamma|^{2/(k+2)}}.$$
\end{thm}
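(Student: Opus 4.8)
The plan is to use dilation analyticity, choosing the (complex) dilation parameter as a function of $\gamma$ so that, after conjugation, $L_\gamma$ becomes $|\gamma|^{2/(\kappa+2)}$ times a perturbation of $e^{-i\pi/(\kappa+2)}$ applied to a \emph{positive self-adjoint} Schr\"odinger operator; a numerical-range bound for the conjugated operator then gives the claim. Concretely, with $(U(\theta)\psi)(x)=e^{\theta/2}\psi(e^\theta x)$ (unitary for real $\theta$) and $\abs{x}^\kappa=(x^2)^{\kappa/2}$ (principal branch), for real $\theta$ one has
\[
L_\gamma(\theta):=U(\theta)L_\gamma U(\theta)^{-1}=e^{-2\theta}(-\partial_x^2)+e^{2\theta}x^2-1+\frac{i\gamma}{1+e^{\kappa\theta}\abs{x}^\kappa},
\]
and the right-hand side extends to an analytic family of $m$-sectorial operators for $\theta$ in a strip $\abs{\Im\theta}<\vartheta_\kappa$ about the real axis: there the rotated potential $i\gamma(1+e^{\kappa\theta}\abs{x}^\kappa)^{-1}$ stays bounded and analytic (avoiding the poles of $f$ and the branch cut of $(x^2)^{\kappa/2}$), and for $\abs{\Im\theta}<\pi/4$ the ``complex harmonic oscillator'' $e^{-2\theta}(-\partial_x^2)+e^{2\theta}x^2$ has compact resolvent and discrete spectrum ($\{2n+1:n\ge0\}$, independent of $\theta$), so $L_\gamma(\theta)$ is a bounded perturbation of it. Standard arguments (each eigenvalue is constant along the real axis by unitary equivalence and stays isolated of finite multiplicity throughout the strip) then give $\sigma(L_\gamma(\theta))=\sigma(L_\gamma)$ for all such $\theta$. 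We may assume $\gamma>0$ (replace $L_\gamma$ by $\overline{L_\gamma}=L_{-\gamma}$) and, for the quantitative bound, $\gamma$ large.

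Next I would make the scale-and-rotate choice. Set $\beta=\frac{\pi}{2(\kappa+2)}$ — so $0<\beta<\min\{\pi/4,\pi/(2\kappa)\}$ for every $\kappa>0$ — and $\theta_\gamma=-\frac1{\kappa+2}\ln\gamma+i\beta$, i.e.\ $e^{\theta_\gamma}=\mu e^{i\beta}$ with $\mu=\gamma^{-1/(\kappa+2)}$ and $\mu^{-2}=\gamma^{2/(\kappa+2)}$. Factoring $\mu^{-2}$ out of $L_\gamma(\theta_\gamma)$ and using $\gamma\mu^2=\mu^{-\kappa}$ gives
\[
L_\gamma(\theta_\gamma)=\gamma^{2/(\kappa+2)}\Bigl[e^{-2i\beta}(-\partial_x^2)+\mu^4e^{2i\beta}x^2+g_\mu(x)\Bigr],\qquad g_\mu(x)=-\mu^2+\frac{i\,\mu^{-\kappa}}{1+\mu^\kappa e^{i\kappa\beta}\abs{x}^\kappa}.
\]
The defining property of $\beta$ is the identity $\kappa\beta-\tfrac{\pi}{2}=-2\beta$, equivalently $e^{-2i\beta}=e^{-i\pi/(\kappa+2)}$. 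Expanding $g_\mu$ near the origin, $g_\mu(x)=i\mu^{-\kappa}-\mu^2+e^{-2i\beta}\abs{x}^\kappa+O(\mu^\kappa\abs{x}^{2\kappa})$, so — apart from the harmless imaginary constant $i\mu^{-\kappa}$ — the ``$f$-potential'' points in exactly the complex direction $e^{-2i\beta}$ of the rotated kinetic term, and in leading order the bracket is $e^{-i\pi/(\kappa+2)}(-\partial_x^2+\abs{x}^\kappa)$: a positive self-adjoint operator rotated by an angle strictly below $\pi/2$.

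It then remains to bound below, uniformly for small $\mu$, the real part of the numerical range of the bracket. For unit $\psi$,
\[
\Re\,\ip{\psi,\bigl[e^{-2i\beta}(-\partial_x^2)+\mu^4e^{2i\beta}x^2+g_\mu\bigr]\psi}=\cos(2\beta)\norm{\partial_x\psi}^2+\int W_\mu(x)\abs{\psi(x)}^2\,dx,
\]
with $W_\mu(x)=\cos(2\beta)\mu^4x^2+\Re g_\mu(x)$ and $\Re g_\mu(x)=-\mu^2+\sin(\kappa\beta)\abs{x}^\kappa\big/\bigl|1+\mu^\kappa e^{i\kappa\beta}\abs{x}^\kappa\bigr|^2\ge-\mu^2$ (using $0<\kappa\beta<\pi/2$). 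A short case analysis over $\abs{x}\le\mu^{-1}$, $\mu^{-1}\le\abs{x}\le\mu^{-2}$, $\abs{x}\ge\mu^{-2}$ yields $W_\mu(x)\ge\Phi(x)-\mu^2$ for a fixed, $\mu$-independent, nonnegative confining potential $\Phi$ with $\Phi(x)\asymp\abs{x}^\kappa\wedge1$. Since $-\partial_x^2$ on $\R$ has no zero-energy eigenfunction and the 1D Sobolev bound $\norm{\psi}_\infty^2\le2\norm{\psi}_2\norm{\partial_x\psi}_2$ forbids concentration, any would-be null sequence of $\cos(2\beta)(-\partial_x^2)+\Phi$ must carry its mass out to $\abs{x}>1$, where $\Phi$ is bounded below; hence $\inf\sigma\bigl(\cos(2\beta)(-\partial_x^2)+\Phi\bigr)=:2\eta_\kappa>0$, and once $\mu^2\le\eta_\kappa$ the displayed real part is $\ge\eta_\kappa$. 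This uniform positivity is the hard part: pointwise one only gets $\Re g_\mu\ge-\mu^2\to0$ and the $f$-potential \emph{vanishes} at the origin, so the required positivity must be harvested from the kinetic term — in effect a state saturating $\Re g_\mu\approx-\mu^2$ would have to live on $\abs{x}\lesssim\mu^{2/\kappa}$ at a kinetic cost $\gtrsim\mu^{-4/\kappa}\gg\mu^2$.

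Granting this, the bracket is $m$-sectorial with compact resolvent and numerical range in $\set{\Re z\ge\eta_\kappa}$, so its spectrum — hence $\sigma(L_\gamma)=\sigma(L_\gamma(\theta_\gamma))$ — lies in $\set{\Re z\ge\eta_\kappa\gamma^{2/(\kappa+2)}}$, which proves the theorem with $C_\kappa=\eta_\kappa$ once $\gamma^{-2/(\kappa+2)}\le\eta_\kappa$. For the remaining bounded range of $\abs{\gamma}$ one argues directly: if $L_\gamma\psi=z\psi$ then $\Re z=\ip{\psi,L_0\psi}/\norm{\psi}^2\ge0$, with equality only if $\psi$ is the ground state of $L_0$, which is not an eigenvector of $L_\gamma$ when $\gamma\ne0$; so $\Re z>0$ for every eigenvalue of $L_\gamma$ whenever $\gamma\ne0$. (Since the lowest eigenvalue leaves $0$ only at order $\gamma^2$, the clean bound with a fixed $C_\kappa$ is genuinely an asymptotic statement for $\abs{\gamma}$ large — precisely the regime in which the complex dilation is needed.)
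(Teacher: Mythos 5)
Your proposal is correct, and while it shares the paper's overall architecture \textemdash\ establish that the dilated family is analytic with compact resolvent and $w$-independent spectrum on a strip, then lower-bound the real part of the numerical range of a complex-rotated operator \textemdash\ it replaces the paper's key quantitative step by a genuinely different argument. The paper fixes $\theta\in(0,\alpha_\kappa)$ once and for all, reduces to the ground-state energy of the self-adjoint operator $H_\gamma(\theta)=-\partial_x^2+x^2+\gamma\tfrac{\sin(\kappa\theta)}{\cos(2\theta)}\tfrac{|x|^\kappa}{|1+\e^{\im\kappa\theta}|x|^\kappa|^2}$, and proves $\lambda_0(H_\gamma(\theta))\ge C\gamma^{2/(\kappa+2)}$ by IMS localization: near the origin it compares with the anharmonic oscillator $K_\alpha=-\partial_x^2+\alpha|x|^\kappa$, whose ground state scales exactly like $\alpha^{2/(2+\kappa)}$; away from the origin it drops the kinetic term and minimizes the potential; and it balances the localization errors by tuning the cut-off scale ($\nu=\tfrac{1}{2+2\kappa}$). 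You instead fold the anharmonic scaling into the dilation itself, taking $\e^{\theta_\gamma}=\gamma^{-1/(\kappa+2)}\e^{\im\beta}$, so that after factoring out $\gamma^{2/(\kappa+2)}$ the conjugated operator is, uniformly for large $\gamma$, an $O(\mu^2)$ perturbation from below of the \emph{fixed} operator $\cos(2\beta)(-\partial_x^2)+\Phi$ with $\Phi\asymp\min(|x|^\kappa,1)$, whose strict positivity you get from a soft concentration argument via the one-dimensional Sobolev bound rather than from the exact anharmonic scaling law. I checked the computations: $\gamma\mu^2=\mu^{-\kappa}$, the identity $\kappa\beta-\pi/2=-2\beta$, the formula $\Re g_\mu=-\mu^2+\sin(\kappa\beta)|x|^\kappa/|1+\mu^\kappa\e^{\im\kappa\beta}|x|^\kappa|^2$, the three-zone bound $W_\mu\ge\Phi-\mu^2$, and the positivity of $\inf\sigma(\cos(2\beta)(-\partial_x^2)+\Phi)$ all hold, and your $\beta=\pi/(2(\kappa+2))$ lies safely inside the paper's strip $S_{\alpha_\kappa}$ (any $\beta\in(0,\min(\pi/4,\pi/(2\kappa)))$ would do; the exact-alignment choice is cosmetic). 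Your route avoids the IMS machinery and the two-parameter bookkeeping in $(\alpha,\nu)$ at the price of a nonconstructive constant $\eta_\kappa$; the paper's route keeps the constant tied to $\lambda_0(K_1)$. One remark, not a gap: you carry Villani's $-1$ in the operator, which is why your clean bound is genuinely asymptotic in $|\gamma|$ (the lowest eigenvalue leaves $0$ only at order $\gamma^2$); with the normalization of \eqref{analyfam}, $\Re\num(L_\gamma)\ge1$ disposes of bounded $\gamma$ at once, and the paper's own final estimate $C\gamma^{2/(2+\kappa)}-C\gamma^{1/(1+\kappa)}$ is likewise only useful for large $\gamma$.
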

\begin{rems} 1) Since the spectrum $\sigma(L_\gamma)$ consists only of eigenvalues, the theorem is  succinctly stated as the inclusion $\sigma(L_\gamma) \subset \set{z : \Re(z) \ge C
|\gamma|^{2/(\kappa+2)}}$. 2) The proof relies on a quadratic form estimate which only gives lower bounds.  However, in \cite{Villani} it is mentioned that numerical calculations by Thiery Gallay for $\kappa=2$  suggest that the obtained rate $\gamma^{1/2}$ is sharp.  This is to be contrasted with the rate $\gamma^{1/4}$ provided by the commutator methods of Villani \cite{Villani}.
\end{rems}

In \cite{Villani}, this problem is motivated as a simplified version of a
spectral problem that arises in the linear stability analysis of a
$2D$ hydrodynamic equation. In that context, one is mainly
interested in understanding the $t \ra \infty$ asymptotic behavior
of solutions to the the initial value problem
\begin{equation}\label{eq:ivp}
\partial_t \psi(x,t) \ = \ \partial_x^2 \psi(x,t) - x^2 \psi(x,t) -
\im \gamma \frac{1}{1 + |x|^\kappa} \psi(x,t), \quad \psi(x,t) =
\psi_0(x),
\end{equation}
The semi-group  $S_t = \e^{-t L_\gamma}$ generates solutions to
\eqref{eq:ivp} via $\psi(x,t) = S_t \psi_0(x)$. It follows that the
large $t$ asymptotic behavior is governed by the eigenvalue of
$L_\gamma$ with smallest real part. Indeed,
$$
  \norm{\e^{-t L_\gamma} } \ \le \ C_{\gamma,\alpha} \ \e^{- \alpha t }
  $$
for any $\alpha < \inf
  \Re(\sigma(L_\gamma)) $,
where for $A \subset \C$ we let $\Re(A) = \set{\Re(z) \ : \ z \in
A}$. Hence, the conclusion of the theorem implies that
\begin{equation}\label{eq:semigroup}
  \int_{\R} |\psi(x,t)|^2 \di x \ \le \ C_{\gamma} \ \e^{-C \gamma^{2/(\kappa+2)}
  t},
\end{equation}
for solutions to \eqref{eq:ivp}. 

The estimate \eqref{eq:semigroup} is quite striking as, in some sense, the convergence of $\psi \ra 0$ is driven entirely by
the ``dissipative'' term $\partial_x^2 \psi - x^2 \psi$ on the
right. Indeed, without the dissipative term, the solution would be
$$\psi(x,t) \ = \ \e^{-\im \gamma
 \frac{ t}{1+|x|^\kappa}} \psi_0(x), $$ which has constant $L^2$ norm. If the
$L^2$ conserving term $-\im \gamma/(1+|x|^\kappa) \psi$ is absent then the $L^2$ norm goes to zero, but at the modest rate $\e^{-t}$ given by finding the smallest eigenvalue of $L_0$. However the combination of the ``conservative'' and ``dissipative'' terms results in much faster convergence to zero. This phenomenon has been dubbed ``hypocoercivity'' by Villani, in analogy with H\"ormander's notion of ``hypoelipticity.''

From another point of view, this is not so surprising. After all, those of us who add milk to our tea  know to stir the cup of tea after adding milk to speed up the mixing of milk and tea. More or less this is the same phenomenon as what is described in eq.~\ref{eq:ivp}.   After all, on it's own the milk would eventually diffuse through the cup of tea.  However, we may greatly decrease the time to acheive diffusion by stirring, which essentially adds a convective term to the diffusion equation for the density of tea.  This convection on it's own, in an imaginary liquid with no dissipation, would only set the milk and tea in ever circulating motion -- it is "conservative"!  Together the dissipative and conservative motions combine into a flow which mixes much faster.

In recent years the mathematical analysis of hypocoercivity and related problems has been the subject of much research.  Constantin, Kiselev, Ryzhik and Zlato\v{s} \cite{CKRZ} have analyzed the phenomenon from an abstract functional analytic perspective summarized in the following theorem:
\begin{thm}[Theorem 11.4 of \cite{CKRZ}]\label{thm:CKRZ}  Let $H_{0}$ be a self-adjoint, positive, unbounded operator with discrete spectrum and let $F$ be self-adjoint and satisfy
\begin{equation*}
\norm{F \psi}^{2} \ \le \ C \ip{\psi, H_{0} \psi} \text{ and } \ip{\e^{\im F t } \psi, H_{0} \e^{\im F t} \psi} \ \le \ B(t)  \ip{\psi, H_{0} \psi} 
\end{equation*}
with constants $C$ and $B(t)$ independent of $\psi$ and $B(t) \in L^{2}_{\mathrm{loc}}(0,\infty)$.  Then for any $\gamma > 0$ the operator $H_{0} + \im \gamma F$ generates a contractive semi-group and the following are equivalent
\begin{itemize}
\item For any $t > 0$
\begin{equation}\label{eq:hypocoercive}
\lim_{\gamma \ra \infty }\norm{\e^{-t (H_{0} + \im \gamma F)}} = 0 .
\end{equation}
\item The operator $F$ has no eigenvectors in $\cu{Q}(H_{0}) = $ form domain of $H_{0}$.
\end{itemize}
\end{thm}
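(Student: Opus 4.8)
The plan is to prove the two implications in turn, both resting on a fast-rotation averaging argument built on the substitution $\phi_t := e^{\im\gamma t F}\psi_t$, where $\psi_t := e^{-t(H_0+\im\gamma F)}\psi_0$. First, the generation statement: the bound $\norm{F\psi}^2 \le C\ip{\psi,H_0\psi}$ makes $F$ infinitesimally form-bounded relative to $H_0$, so $H_0+\im\gamma F$ is m-sectorial on the form domain $\cu Q(H_0)$ with numerical range in $\set{\Re z\ge 0}$, whence $-(H_0+\im\gamma F)$ generates a contraction semigroup. A direct computation gives $\partial_t\phi_t = -\wt H_0(t)\phi_t$ with $\wt H_0(t):=e^{\im\gamma t F}H_0 e^{-\im\gamma t F}$ and $\norm{\phi_t}=\norm{\psi_t}$, and differentiating $\norm{\psi_t}^2$ yields the dissipation identity $\norm{\psi_t}^2=\norm{\psi_0}^2-2\int_0^t\ip{\psi_s,H_0\psi_s}\,ds$ (the skew term drops out because $\ip{\psi_s,F\psi_s}$ is real). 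Since $\norm{\psi_t}^2\ge 0$ forces $\int_0^t\ip{\psi_s,H_0\psi_s}\,ds\le\tfrac12$ when $\norm{\psi_0}=1$, the whole theorem amounts to deciding when $\inf_{\norm{\psi_0}=1}\int_0^t\ip{\psi_s,H_0\psi_s}\,ds$ is forced up to $\tfrac12$ as $\gamma\to\infty$.

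For the direction ``$F$ has no eigenvectors in $\cu Q(H_0)$ $\Rightarrow$ \eqref{eq:hypocoercive}'', fix $E>0$, let $P_E:=\mathbf 1_{H_0\le E}$ (finite rank), and use $H_0\ge E(I-P_E)$ to get $\ip{\psi_s,H_0\psi_s}\ge E(\norm{\psi_s}^2-\norm{P_E\psi_s}^2)$. Feeding this into the dissipation identity and using that $\norm{\psi_s}$ is non-increasing gives $\norm{\psi_t}^2(1+2Et)\le\norm{\psi_0}^2+2E\int_0^t\norm{P_E\psi_s}^2\,ds$; taking the supremum over $\norm{\psi_0}\le 1$ turns the last integral into the operator norm of $K_\gamma:=\int_0^t e^{-s(H_0+\im\gamma F)^{*}}P_E\,e^{-s(H_0+\im\gamma F)}\,ds$. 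Hence, once $\norm{K_\gamma}\to 0$ as $\gamma\to\infty$ for each fixed $E,t$, one has $\norm{e^{-t(H_0+\im\gamma F)}}^2\le(1+2E\norm{K_\gamma})/(1+2Et)$, and letting $\gamma\to\infty$ and then $E\to\infty$ gives the operator-norm decay --- so norm (not merely strong) convergence comes for free. The claim $\norm{K_\gamma}\to 0$ is a RAGE-type statement: using $P_E e^{-\im\gamma s F}\phi_s=P_E\psi_s$ and rescaling time by $\gamma$, $\int_0^t\norm{P_E\psi_s}^2\,ds$ is comparable to a Ces\`aro average $\tfrac1T\int_0^T\norm{P_E e^{-\im\tau F}(\,\cdot\,)}^2\,d\tau$ with $T=\gamma t\to\infty$, which --- $P_E$ being compact --- is governed by the pure-point part of $F$ seen by $\ran P_E\subset\cu Q(H_0)$, and that is empty by hypothesis. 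The one subtlety is that the rotated vector $\phi_s$ depends on $\gamma$ and $s$ and is bounded only in $L^2$ (its $H_0$-energy may grow with $\gamma$); this is handled by extracting weak-$L^2$ limits along a hypothetical bad sequence $\gamma_n\to\infty$ and running the averaging against the fixed finite-rank $P_E$, where only the behaviour of $F$-eigenvectors inside $\cu Q(H_0)$ can intervene.

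For the converse, suppose $F\phi=\lambda\phi$ with $0\ne\phi\in\cu Q(H_0)$, $\lambda\in\R$, $\norm{\phi}=1$; the claim is $\liminf_{\gamma\to\infty}\norm{e^{-t(H_0+\im\gamma F)}\phi}>0$ for every $t$, contradicting \eqref{eq:hypocoercive}. Decompose $H_0\phi=\ip{\phi,H_0\phi}\phi+\chi$ with $\chi\perp\phi=\ker(F-\lambda)$. In the unwrapped picture one computes, for $\phi_t$ near the line $\C\phi$, that $\wt H_0(t)\phi_t\approx c(t)\big(\ip{\phi,H_0\phi}\phi+e^{\im\gamma t(F-\lambda)}\chi\big)$; since $\ip{\phi,e^{\im\gamma t(F-\lambda)}\chi}=\ip{\phi,\chi}=0$ and $e^{\im\gamma t(F-\lambda)}\chi$ time-averages (as $\gamma\to\infty$) to the projection of $\chi$ onto $\ker(F-\lambda)$, which vanishes, the component of $\phi_t$ along $\phi$ follows the averaged scalar flow $c'=-\ip{\phi,H_0\phi}c$, $c(0)=1$, while the transverse component stays $o_\gamma(1)$. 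Thus $\norm{e^{-t(H_0+\im\gamma F)}\phi}=\norm{\phi_t}\ge e^{-t\ip{\phi,H_0\phi}}-o_\gamma(1)$, bounded away from $0$ for large $\gamma$. Making the two ``$\approx$'' steps rigorous again requires the averaging lemma, now for $e^{\im\gamma t(F-\lambda)}$ on $\ker(F-\lambda)^{\perp}$, with extra care when $\lambda$ is not isolated in $\sigma(F)$ and the averaging is ergodic rather than of resolvent type.

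The crux --- and the step I expect to be the main obstacle --- is this averaging/RAGE lemma: that the rapidly oscillating conjugation $e^{\im\gamma s F}(\,\cdot\,)e^{-\im\gamma s F}$, integrated in $s$ against a fixed compact operator, tends as $\gamma\to\infty$ to the corresponding ergodic average over the unitary group of $F$, uniformly enough to survive that the vectors entering it are produced by the $\gamma$-dependent flow and are bounded only in $L^2$. This is exactly where the hypothesis needs to read ``no eigenvectors of $F$ in $\cu Q(H_0)$'' rather than ``no eigenvectors of $F$ at all'': the only vectors ever rotated and then tested against the finite-rank $P_E\subset\cu Q(H_0)$ --- or against $\chi$ in the converse --- lie in $\cu Q(H_0)$, so an $F$-eigenvector outside $\cu Q(H_0)$ never contributes. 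The quantitative hypotheses $\norm{F\psi}^2\le C\ip{\psi,H_0\psi}$ and $\ip{e^{\im Ft}\psi,H_0 e^{\im Ft}\psi}\le B(t)\ip{\psi,H_0\psi}$ with $B\in L^2_{\mathrm{loc}}$ serve to make $\wt H_0(t)$ a jointly measurable family with a common form core and to bound, via Gronwall, the error terms in the averaging.
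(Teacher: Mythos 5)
First, a framing remark: the paper does not prove this statement. Theorem \ref{thm:CKRZ} is quoted from \cite{CKRZ} and used as a black box to derive Theorem \ref{thm:divergence}, so there is no internal proof to compare against. Judged on its own, your outline does track the strategy of the actual proof in \cite{CKRZ} --- the interaction picture $\phi_t=\e^{\im\gamma tF}\psi_t$, the dissipation identity, the finite-rank spectral cutoff $P_E$ of $H_0$, and a RAGE-type averaging step --- and the reduction of the forward implication to $\norm{K_\gamma}\ra 0$ is correct as far as it goes.

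The genuine gap is exactly where you flag it, and it is not a routine technicality that can be waved through. The naive RAGE limit of the Ces\`aro average $\frac1T\int_0^T\norm{P_E\,\e^{-\im\tau F}\phi}^2\,\di\tau$ is $\sum_j\norm{P_E\Pi_j\phi}^2$, where the $\Pi_j$ are the eigenprojections of $F$. This quantity is \emph{not} annihilated by the hypothesis that $F$ has no eigenvectors in $\cu Q(H_0)$: a vector $\phi$, even one lying in $\cu Q(H_0)$, can overlap with $F$-eigenvectors living outside $\cu Q(H_0)$, and $P_E$ need not kill those. So your parenthetical ``only the behaviour of $F$-eigenvectors inside $\cu Q(H_0)$ can intervene'' is precisely the assertion requiring proof, and it is false as a statement about the ergodic limit itself. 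Closing the argument requires two further ingredients (both present in \cite{CKRZ}): (i) a dichotomy in time --- on intervals where the solution has large $H_0$-energy the dissipation identity already forces decay, and only on the remaining intervals is the solution comparable to the free unitary flow of a vector with \emph{bounded} $H_0$-form norm, which is what neutralizes your worry that the energy of $\phi_s$ ``may grow with $\gamma$''; and (ii) a compactness argument on the point-spectral subspace of $F$ showing that unit vectors there with bounded $H_0$-form norm and $\norm{P_E\,\cdot}$ bounded below would converge (weakly in the form norm, strongly in $L^2$) to an $F$-eigenvector in $\cu Q(H_0)$, contradicting the hypothesis. Without (i) and (ii) the forward implication does not close. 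Separately, your converse is far heavier than needed: with $F\phi=\lambda\phi$, $\phi\in\cu Q(H_0)$, $\norm{\phi}=1$, set $g(t)=\ip{\e^{-\im\gamma\lambda t}\phi,\psi_t}$; the skew terms cancel exactly, leaving $g'(t)=-\ip{H_0^{1/2}\phi,H_0^{1/2}\psi_t}$, and Cauchy--Schwarz plus your own dissipation identity give $\norm{\psi_\tau}\ge|g(\tau)|\ge 1-\ip{\phi,H_0\phi}^{1/2}\sqrt{\tau/2}$, uniformly in $\gamma$ --- no averaging of the transverse component, and no care about non-isolated $\lambda$, is required.
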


Using Thm.\ \ref{thm:CKRZ} we may easily prove Thm.\ \ref{thm:divergence}. Note that any normalized eigenvector $\psi$ satisfies
\begin{equation*} \norm{\e^{-t (H_{0} + \im \gamma F)} \psi} = \e^{-t \Re \lambda}
\end{equation*}
with $\lambda$ the corresponding eigenvalue.  It follows that all eigenvalues satisfy
\begin{equation}\label{eq:eigbound}
\Re \lambda \ge - \frac{1}{t} \ln \norm{\e^{-t (H_{0} + \im \gamma F)}}. 
\end{equation}
Thus \eqref{eq:hypocoercive} implies that $\lim_{\gamma \ra \infty} \inf \Re \sigma(H_{0}+ \gamma F) = \infty$. This result applies in the present context, with the operator $H_{0} = - \partial_{x}^{2} + x^{2}$ and $F = f(x)$. The form domain of $H_{0}$ is the  space
\begin{equation}\label{eq:H1}
\cu{H}_{1} = \{ \psi : \partial_{x} \psi \text{ and } x \psi \in L^{2}(\R) \}.
\end{equation}
An eigenvector of $F$ is a function supported on a level set of $f$.  Thus  $\norm{\e^{-t (H_{0} + \im \gamma F)} } \ra 0$
as $\gamma \ra \infty$ if and only if the level sets of $f$ support no (non-trivial) $\cu{H}_{1}$ eigenfunctions.  Since $\cu{H}_{1}\subset C(\R)$ Theorem \ref{thm:divergence} follows.

However, \eqref{eq:eigbound} and \eqref{eq:hypocoercive} do not provide quantitative information on the rate of divergence.   To find this we must estimate, for finite $\gamma$, the location of the eigenvalues.   Before proceeding, it is instructive to ask, in general, ``how do we estimate the location of spectrum?'' When one is available, a variational principle
relating eigenvalues to extrema of a quadratic form is one of the most effective tools. For example, consider the self-adjoint eigenvalue problem
\begin{equation}\label{eq:basiceigprob}
  -\partial_x^2 \psi(x) + V(x) \psi(x) \ = \ \lambda \psi(x) ,
\end{equation}
with real valued $V$ that diverges to $+\infty$ as $x \ra \pm \infty$.
The smallest $\lambda \in \R$  such that a solution $\psi \in
L^2(\R)$ may be found is called the \emph{ground state eigenvalue} and satisfies a variational
principle
\begin{equation}\label{eq:basicvarprinc}
  \lambda_0 \ = \ \inf \left \{ \left . \int_{-\infty}^\infty
 |\partial_x \psi(x)|^2 + V(x) |\psi(x)|^2  \di x \ \right | \
  \int_{-\infty}^\infty |\psi(x)|^2 \di x = 1 \right \}.
\end{equation}
Furthermore a minimizing wave function $\psi$ satisfies the
eigenvalue equation \eqref{eq:basiceigprob}. The min-max principle
generalizes this to higher eigenvalues.

It would be hard to overstate the utility of
\eqref{eq:basicvarprinc}. To find an upper bound on $\lambda_0$ we
just compute the energy
$  \cu{E}(\psi)  \ = \ \int_{-\infty}^\infty
 |\partial_x \psi(x)|^2 + V(x) |\psi(x)|^2  \di x$
of any normalized function $\psi \in L^2(\R)$. To find a \textit{good}
upper bound we ``simply'' make a suitably clever choice of $\psi$.
Finding a lower bound is often more involved, requiring a uniform
estimate from below on $\cu{E}(\psi)$, but even that is often
possible, particularly in an asymptotic limit (see, e.g., Simon's paper on semiclassical analysis \cite{Simon}). In way of contrast, an exact analysis of the eigenvalue problem \eqref{eq:basiceigprob} is  possible only for a few explicitly solvable examples.

However, for the operators $L_\gamma$ considered here \textemdash \
which are not self-adjoint \textemdash \  a direct variational
approach provides little insight. It remains true that any
eigenvalue $\lambda$ satisfies
\begin{equation}\label{eq:nonsavarprinc} \Re \lambda  \ \ge \ \inf
\set{ \Re \ip{\psi, L_\gamma \psi} \ : \ \int |\psi(x)|^2 = 1 },
\end{equation}
since $\Re \lambda \ = \ \Re \ip{\psi_\lambda, L_\gamma
\psi_\lambda}$ with $\psi_\lambda$ the corresponding normalized
eigenfunction. However, the infimum on the r.h.s.\ is insensitive to
the imaginary term $\im \gamma/(1+ |x|^\kappa)^m$ in the operator, and
thus is independent of $\gamma$.  So all we learn is that
$$ \Re \lambda \ \ge \ \inf \set{ \Re \ip{\psi, L_0 \psi} \ : \ \int
|\psi(x)|^2 = 1 } \ = \ 1 ,$$
using the well known explicit diagonalization of  $L_0  =
- \partial_x^2 + x^2$.

The cornerstone of the variational
approach is a relation between numerical range and spectrum valid
only for normal operators.\footnote{Recall that $L$ is normal if it commutes with
it's adjoint $L^\dagger$. Lack of self-adjointness is not really the problem here. The variational approach would work in principle if $L_\gamma$ were
normal.} The \emph{numerical range} of a linear
operator $L$ on a Hilbert space $\cu{H}$ refers to the set
\begin{equation*}
\num(L) \ = \ \set{\left . \ip{\psi, L \psi}_{\cu{H}} \,  \right |
\, \psi \in \cu{H} \text{ and } \norm{\psi}_{\cu{H}} = 1 }.
\end{equation*}
If $L$ is normal then we have
\begin{equation}
\label{eq:normalspectrum=range} \cch(\sigma(L)) \ = \ \cl
(\num(L)),
\end{equation}
where $\cch$ denotes the ``closed convex hull'' and $\cl$ denotes topological closure.
This may be seen using the spectral theorem, since it is elementary
for multiplication operators on $L^2$ spaces. In particular, one has the following
variational principle for normal operators:
\textit{extreme points of the closed numerical range are in the
spectrum}.

For a general closed operator $L$ we do not have \eqref{eq:normalspectrum=range}. However, since the \emph{point spectrum} of an operator $L$ clearly falls in $\num(L)$ we have 
\begin{equation}\label{eq:spectruminrange0}
  \sigma(L) \  \subset \ \cl (\num(L)), \end{equation}
whenever $L$ has compact resolvent --- as do the operators $L_{\gamma}$ considered here.\footnote{Eq.\ \eqref{eq:spectruminrange0} also holds whenever $L$ is bounded, but may fail for an unbounded closed operator. For example consider $M\psi=\psi'$ on the space $H^{1}_{0}(0,1)$ of functions vanishing at $0$ and $1$ but with $L^{2}$ derivative.  Then $M$ is closed and $\num(M)=\im \R$, but $\sigma(M) =\C$ since $\ran (M-zI)$ is perpendicular to $e^{\overline{z}x}$ for any $z \in \C$. }
In fact, since $\num(L)$ is convex \cite[Theorem V.3.1]{Kato}, we have
\begin{equation}
\label{eq:spectruminrange}
\cch(\sigma(L)) \ \subset \ \cl (\num(L))
\end{equation}
for $L$ with compact resolvent.
However, there does not appear to be a relation between spectrum and numerical range valid for all operators $L$ with compact resolvent beyond the inclusion
\eqref{eq:spectruminrange}.  Furthermore one knows from examples that the spectrum may lie arbitrarily far from the extreme points of the numerical range. For example, the numerical range of a $2 \times 2$ matrix is an ellipse with foci at the eigenvalues, but the radii of the ellipse may be arbitrarily large.

One quite natural idea is to exploit the invariance of $\sigma(L)$ under the map $L \mapsto T^{-1}L T$ where $T$ is bounded with bounded inverse.  This map fixes the numerical range only if $T$ is unitary, so we may gain something by this procedure.  Indeed, if we
were lucky enough to find $T$ so that $T^{-1} L T$ were normal, we would have a variational principle for the eigenvalues of $L$ using the quadratic form $\ip{\psi,T^{-1}LT}$.  Even if $T^{-1} L T$ is not normal, we may hope to use $T$ to bring 
the extreme points of the numerical range closer to the spectrum of $L$.  In finite dimension, this procedure will work, in principle, for any
$L$. Indeed, if $\dim(\cu{H}) < \infty$ then
\begin{equation}\label{eq:extendedvarprinc}
\cch (\sigma(L) ) \ = \ \bigcap_{T \in \cu{GL}(\cu{H})}
\cl(\num(T^{-1} L T)).
\end{equation}
In fact, using the Jordan canonical form one can show a bit more:
for any $\epsilon > 0$ there is $T_\epsilon \in \cu{GL}(\cu{H})$
such that $\cl(\num(T_\epsilon^{-1} L T_\epsilon))$ is contained in
the $\epsilon$ neighborhood of $\cch(\sigma(L))$.  It is not
clear if \eqref{eq:extendedvarprinc} holds for general
$L$ with compact resolvent when $\dim(\cu{H}) = \infty$.  Anyway the answer to this question may not be so relevant, as in practice it is rather difficult to
produce an effective conjugating operator without already knowing the Jordan form of $L$. 

So how do we locate the spectrum of $L_\gamma$? Let us first
consider an heuristic approach, that  is very
close in spirit to the rigorous method applied below. Let $\lambda$
be an eigenvalue, $L_\gamma \psi = \lambda \psi$, with eigenfunction
$\psi$.   Suppose that $\psi$ has a holomorphic extension $\psi(z)$ for
$z \in C_\beta := \{x \e^{\im \theta} \ : \ x \in \R \text{ and }
|\theta| < \beta\}$. By rotating the contour on which we evaluate
the eigenfunction equation we find that
$$ - \e^{- 2 \im \theta} \partial_x^2 \phi(x) + \e^{2 \im \theta}x^2 \phi(x)
+ \im \gamma \frac{1}{1 + \e^{\kappa \im \theta} |x|^{\kappa }} \phi(x) =
\lambda \phi(x),$$ where $\phi(x) = \psi(\e^{\im \theta} x)$ for
some $\theta < \beta$. (We have extended $|x|^\kappa$ to the holomorphic
function $(z^2)^{\frac{\kappa}{2}}$, single valued on $C_\beta$ with a
branch singularity at $0$ provided $\beta$ is not too large.) Assuming the
complex rotated  function $\phi$ is square integrable, we discover
that $\lambda$ is an eigenvalue of
\begin{equation} \label{eq:dilatedL} L_\gamma^{(\im \theta)} \ = \ - \e^{- 2 \im \theta} \partial_x^2  + \e^{2 \im \theta}x^2
+ \im \gamma \frac{1}{1 + \e^{\kappa \im \theta} |x|^{\kappa}}.\end{equation} In
particular, applying \eqref{eq:nonsavarprinc} with $L_\gamma^{(\im
\theta)}$ in place of $L_\gamma$, we find $\Re \lambda \ge \inf
\sigma(\Re L_\gamma^{(\im \theta)})$ where
$$
\Re L_\gamma^{(\im \theta)}\ = \  \frac{1}{2} \bigl [ L_\gamma^{(\im
\theta)} + {L_\gamma^{(\im \theta)}}^\dagger \bigr ] \ = \ \cos(2
\theta) \left \{ -
\partial_x^2 + x^2 + \gamma \frac{ \sin(\kappa \theta)}{\cos(2 \theta)}
\frac{ |x|^\kappa }{\abs{1 + \e^{\kappa \im
\theta} |x|^\kappa}^{2}} \right \}.
$$
The operator in curly brackets is a Schr\"odinger operator with
(real valued) potential
$$V(x) \ = \ x^2 + \gamma \frac{ \sin(\kappa \theta)}{\cos(2 \theta)}
\frac{ |x|^\kappa }{\abs{1 + \e^{\kappa \im
\theta} |x|^\kappa}^{2}}.$$ If $\theta$ is sufficiently small then $V(x)
> x^2$ and near the origin
$$V(x) \ \sim \ x^2 +  \gamma \frac{\sin(\kappa \theta)}{\cos(2 \theta)}
|x|^\kappa .$$ Applying ideas from semi-classical analysis, as in \cite{Simon}, one
concludes for large $\gamma$ that the ground state of $\Re L_\gamma^{(\im
\theta)}$ is an approximate ground state of $$\cos(2 \theta) \left \{
-\partial_x^2 + \gamma \frac{\sin(\kappa \theta)}{\cos(2 \theta)} |x|^\kappa
\right \},$$ whose ground state eigenvalue is seen to be proportional to
$\gamma^{2/(\kappa+2)}$ by scaling.

There are two deficiencies with the above argument.  First we do not
know that the wave function is analytic. Second, even if $\psi$ were
analytic, there would be no \emph{a priori} reason to believe that $\phi
\in L^2$. A reader familiar with the technique of "complex scaling" from scattering theory for (self-adjoint) Schr\"odinger operators  will recognize the way out. We simply \emph{ignore} these problems, focusing our attention instead on the spectral analysis of the following analytic family of operators
\begin{equation}\label{analyfam}
L_{\gamma}^{(w)}  \ = \ -  \e^{- 2 w} \partial_x^2 + \e^{2 w} x^2
+ \im \gamma \frac{1}{1 + \e^{\kappa w} \abs{x}^{\kappa}}
\end{equation}
with $w=u+\im \theta$ a complex parameter in a strip $\{ | \theta | <
\delta \}$ with  $ \delta$ sufficiently small. For real $w=u$, with $\theta=0$, we have
\begin{equation}\label{eq:dilateL0} L_\gamma^{(u)} =
T_u L_\gamma T_u^\dagger,$$ where $T_u$ is the unitary dilation operator,
$$ T_u \psi(x) \ = \ \e^{\frac{1}{2} u} \psi(\e^{u x}).
\end{equation}
However, for complex $w$, $L_{\gamma}^{(w)}$ is not obtained from $L_\gamma$ by conjugation, but rather by analytic continuation from the operator valued map $u \mapsto L_\gamma^{(u)}$.  With this set-up, we will prove Theorem \ref{thm:main} in two steps:
\begin{enumerate}
\item From analyticity we will show that the spectrum of $L_\gamma^{(w)}$ is independent of $w$ as $w$ varies in the strip $\{|\Im w| < \delta \}.$
\item Using semi-classical analysis of $\Re L_\gamma^{(i \theta)}$ for $\theta \neq 0$ we will obtain an effective estimate on the real part of the lowest eigenvalue.
\end{enumerate}


\section{Proof}
For each $\alpha >0$ let $S_\alpha = \set{w\in \C \ : \ |\Im w|< \alpha  } .$
The first step of the argument is to show that $w \mapsto \left [ L_\gamma^{(w)}\right ]^{-1}$ is a compact operator valued analytic map on a strip $S_{\alpha_\kappa}$.
\begin{lem}Let
$$\alpha_\kappa = \pi \min \left ( \frac{1}{2\kappa},\frac{1}{2} \right ).$$
For each $w\in S_{\alpha_\kappa}$ the operator $L_\gamma^{(w)} $ has compact resolvent and
$$ \sigma(L_\gamma^{(w)}) = \sigma (L_\gamma).$$
\end{lem}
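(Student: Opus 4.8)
The plan is to realize $w\mapsto L_\gamma^{(w)}$ as a holomorphic family of operators of type (A) on $S_{\alpha_\kappa}$, all with compact resolvent, and then to deduce that $\sigma(L_\gamma^{(w)})$ is constant on the connected strip from the fact that it is \emph{already} constant --- indeed equal to $\sigma(L_\gamma)$ --- on the real axis, where the operators $L_\gamma^{(u)}=T_uL_\gamma T_u^\dagger$ are mutually unitarily equivalent, hence isospectral. For the setup, observe that $\alpha_\kappa$ is calibrated so that two things hold for $w\in S_{\alpha_\kappa}$. First, $\e^{\kappa w}$ stays off $(-\infty,0]$; in fact $\Re(\e^{\kappa w}\abs{x}^\kappa)\ge 0$ for every $x\in\R$, so $\abs{1+\e^{\kappa w}\abs{x}^\kappa}\ge 1$ and $g_w(x):=(1+\e^{\kappa w}\abs{x}^\kappa)^{-1}$ is a bounded multiplication operator with $\norm{g_w}_\infty\le 1$, holomorphic in $w$ uniformly in $x$, and tending to $0$ as $\abs{x}\to\infty$. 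Second, $\Im w$ is small enough that the ``free part'' $H_w:=-\e^{-2w}\partial_x^2+\e^{2w}x^2$ stays sectorial: a unitary dilation reduces to $w=\im\theta$, and then the form $\psi\mapsto \e^{-2\im\theta}\norm{\partial_x\psi}^2+\e^{2\im\theta}\norm{x\psi}^2$ on the form domain $\cu H_1$ of \eqref{eq:H1} has real part $\cos(2\theta)\,(\norm{\partial_x\psi}^2+\norm{x\psi}^2)>0$ and is closed, hence defines an m-sectorial operator; it has compact resolvent since $\cu H_1$ embeds compactly into $L^2(\R)$, and a standard separation (coercivity) estimate $\norm{\partial_x^2\psi}+\norm{x^2\psi}\le C(\norm{H_w\psi}+\norm\psi)$ identifies its domain with $\cu D$. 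Since $g_w$ is bounded, $L_\gamma^{(w)}=H_w+\im\gamma g_w$ is then closed on $\cu D$ with compact resolvent, and $w\mapsto L_\gamma^{(w)}\psi$ is holomorphic for each $\psi\in\cu D$; so $\{L_\gamma^{(w)}\}_{w\in S_{\alpha_\kappa}}$ is of type (A).

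For the constancy itself, fix $z\notin\sigma(L_\gamma)$; I want $z\in\rho(L_\gamma^{(w)})$ for every $w\in S_{\alpha_\kappa}$. From the above, $\num(L_\gamma^{(w)})\subset\set{\e^{-2w}a+\e^{2w}b \ : \ a,b\ge 0}+\set{\zeta:\abs\zeta\le\abs\gamma}$, which for every $w$ in the strip lies in the fixed half-plane $\set{\Re\zeta\ge-\abs\gamma}$. Hence, using the inclusion $\sigma\subset\cl(\num)$ valid for operators with compact resolvent, any $z_0$ with $\Re z_0<-\abs\gamma-1$ lies in $\rho(L_\gamma^{(w)})$ for all $w$, with $\norm{(L_\gamma^{(w)}-z_0)^{-1}}$ bounded uniformly over the strip. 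Writing $L_\gamma^{(w)}-z=(L_\gamma^{(w)}-z_0)(I-(z-z_0)(L_\gamma^{(w)}-z_0)^{-1})$ and observing that $w\mapsto(z-z_0)(L_\gamma^{(w)}-z_0)^{-1}$ is a holomorphic family of \emph{compact} operators, the analytic Fredholm theorem applies on the connected set $S_{\alpha_\kappa}$: since the bracketed factor is invertible for real $w$ (there $z\in\rho(L_\gamma)$), it is invertible off a discrete set $D_z\subset S_{\alpha_\kappa}\setminus\R$, and $(L_\gamma^{(w)}-z)^{-1}$ continues holomorphically to $S_{\alpha_\kappa}\setminus D_z$.

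It remains to show $D_z=\emptyset$, and this is the heart of the matter --- the standard complex-dilation argument. On the dense set $\cu A$ of dilation-analytic vectors (e.g. finite linear combinations of Hermite functions, for which $u\mapsto T_u\psi$ extends holomorphically to a small strip as an $L^2(\R)$-valued map) one has, for real $w$, the identity $\ip{\phi,(L_\gamma^{(w)}-z)^{-1}\psi}=\ip{T_{-\bar w}\phi,(L_\gamma-z)^{-1}T_{-w}\psi}$ (writing it with $T_{-\bar w}$ in the first slot to respect conjugate-linearity), whose right-hand side is manifestly holomorphic in $w$ on \emph{all} of $S_{\alpha_\kappa}$. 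By uniqueness of analytic continuation, the matrix element $\ip{\phi,(L_\gamma^{(w)}-z)^{-1}\psi}$ therefore extends holomorphically through every point of $D_z$, for all $\phi,\psi\in\cu A$. But if $w_1\in D_z$ then $z$ is an eigenvalue of $L_\gamma^{(w_1)}$ of finite multiplicity, so (as $D_z$ is discrete and the family is of type (A)) $(L_\gamma^{(w)}-z)^{-1}$ has a genuine finite-rank pole at $w_1$ --- its singular part cannot cancel --- which is visible in a matrix element against vectors of $\cu A$ chosen near the (finite-dimensional) eigenspace: a contradiction. Hence $D_z=\emptyset$, so $\sigma(L_\gamma^{(w)})\subseteq\sigma(L_\gamma)$; running the same argument with the roles of $L_\gamma$ and $L_\gamma^{(w)}$ exchanged shows that no eigenvalue of $L_\gamma$ can be lost, giving $\sigma(L_\gamma^{(w)})=\sigma(L_\gamma)$.

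The routine parts here are the setup and the Fredholm reduction; the only mild technicality in those is the separation estimate pinning down $\mathrm{Dom}(H_w)=\cu D$. The real obstacle is the last step --- excluding the creation or annihilation of eigenvalues as $\Im w$ grows --- which is precisely the mechanism underlying Aguilar--Balslev--Combes complex-scaling theory, handled here by the dilation-analytic-vector identity. A subsidiary point one must check is that the numerical-range localization of the previous paragraph is genuinely uniform over the whole strip: this is exactly where the two conditions built into the definition of $\alpha_\kappa$ --- that $\cos(2\Im w)>0$ and that $\norm{g_w}_\infty\le 1$ --- are used.
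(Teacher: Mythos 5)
Your overall architecture (a holomorphic family with compact resolvent on the strip, plus constancy of the spectrum anchored at the real axis where the operators are unitarily conjugate) matches the paper's, but you execute both halves differently. For the family itself, the paper works at the form level: it shows $Q_\gamma^{(w)}$ is a holomorphic family of type (a) on $\cu H_1$ (closedness and sectoriality via the explicit bounds \eqref{eq:ulbounds}), so that $L_\gamma^{(w)}$ is of type (B), and Kato's dichotomy then gives compact resolvent for all $w$ because it holds at $w=0$. You aim instead for type (A), which forces you to identify $\mathrm{Dom}(H_w)$ with $\cu D$ via a separation estimate for the rotated oscillator; that is true, but it is exactly the work the form-level route is designed to avoid. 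For the constancy of the spectrum, the paper invokes Kato's Theorem VII.1.9 --- for a holomorphic family with compact resolvent, a given $\lambda$ is an eigenvalue either for all $w$ or only for $w$ in a discrete set --- and kills the discrete alternative in one line: since $L_\gamma^{(u+\im\theta)}=T_u^\dagger L_\gamma^{(\im\theta)}T_u$, the exceptional set would be invariant under all real translations, hence not discrete unless empty. You replace this with the full Aguilar--Balslev--Combes mechanism (analytic Fredholm reduction plus continuation of matrix elements on dilation-analytic vectors). Your closing step is sound in principle --- a nonzero finite-rank Laurent coefficient cannot have vanishing matrix elements against a dense set --- but the machinery is considerably heavier than necessary; the translation-invariance observation does the whole job and needs no dilation-analytic vectors at all.

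There is one genuine soft spot in your version. Hermite functions are dilation-analytic only in the strip $|\Im w|<\pi/4$: for $\psi=p\,\e^{-x^2/2}$ one has $T_{-w}\psi\in L^2$ only when $\Re \e^{-2w}>0$. So the right-hand side of your matrix-element identity is ``manifestly holomorphic'' only on $S_{\pi/4}$, not on all of $S_{\alpha_\kappa}$ when $\kappa<2$ (where $\alpha_\kappa>\pi/4$, indeed $\alpha_\kappa=\pi/2$ for $\kappa\le 1$). The same threshold governs your numerical-range localization, which needs $\cos(2\Im w)\ge 0$. As written, your argument therefore proves the lemma only on $S_{\min(\alpha_\kappa,\pi/4)}$; to cover the full stated strip you would need either a dense set of vectors analytic in the wider sector or the translation-invariance argument. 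In fairness, the paper's own sectoriality estimate also uses $\cos(2\theta)>0$, so the stated $\alpha_\kappa$ is too generous there as well, and only small $\theta$ is needed for Theorem \ref{thm:main}; but you should make the restriction explicit rather than assert holomorphy on the whole strip.
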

\begin{proof}  We will analyze $L_\gamma^{(w)}$ through the associated quadratic forms 
\begin{equation}\label{eq:forms}
Q_\gamma^{(w)}  (\psi) =e^{-2w} \norm{\partial_x\psi}_{L^2}^2 +e^{2w} \norm{x\psi}_{L^2}^2 + \im \alpha \int_{-\infty}^\infty \frac{1}{ 1+e^{\kappa w} \abs{x}^{\kappa}} |\psi(x)|^2 \di x,
\end{equation}
defined on the form domain $\cu{H}_1$ (see \eqref{eq:H1}). Our goal is to show, in the terminology of Kato \cite{Kato}, that $w \mapsto Q_\gamma^{(w)}$ is a holomorphic family of type (a), which is to say
\begin{enumerate}
\item $w\mapsto Q_\gamma^{(w)}(\psi)$ is a holomorphic map for each $\psi \in \cu{H}_1$, and
\item for fixed $w$ the form $Q_\gamma^{(w)}(\cdot)$ is sectorial and closed on $\cu{H}_1$.
\end{enumerate}
Once we have shown (1) and (2), it follows by definition that $L^{(w)}_\gamma$ is a ``holomorphic family of type (B)'' and thus, by Theorem 4.3 of \cite{Kato} that $L_\gamma^{(w)}$ either has compact resolvent for all $w$ or for  no $w$.  Since $L_\gamma=L_\gamma^{(0)}$ has compact resolvent it follows that $L_\gamma^{(w)}$ has compact resolvent for all $w$. 

Take (1) and (2) for granted for the moment.  Then $w \mapsto L_\gamma^{(w)}$ is a holomorphic family with compact resolvent.  By \cite[Theorem VII.1.9]{Kato}, we have the further dichotomy that $\lambda \in \C$ is an eigenvalue of $L_\gamma^{(w)}$ either for $w$ in a discrete set or for all $w\in S_{\alpha_k}$.  We may rule out the possibility of a discrete set by noting that 
$$L_\gamma^{(u+\im \theta)}=T_u^\dagger L_\gamma^{(\im \theta)} T_u,$$
where $T_u$ is the unitary dilation operator \eqref{eq:dilateL0}.  That is, the family $u \mapsto L^{(u+\im \theta)}$ is related by unitary conjugations and is thus isospectral.  Thus $L_\gamma^{(w)}$ is isospectral as claimed and the lemma is proved once we verify (1) and (2). 

Turning now to (1) and (2), note that (1) follows easily from the explicit expression for $Q_\gamma^{(w)}$.  (Since $\Re e^{\kappa w}>0$ on $S_{\alpha_\kappa}$ the integrand in the third term on the right hand side of \eqref{eq:forms} cannot develop a singularity.)   To prove (2), namely sectoriality, note that
\begin{equation*} Q_\gamma^{(u+\im \theta)}= Q_\gamma ^{(\im \theta)} \circ T_u,
\end{equation*}
where the dilation $T_u$ clearly maps $\cu{H}_1$ onto $\cu{H}_1$. Thus $Q_\gamma^{(u+\im \theta)}$ is closed and sectorial if and only if $Q_\gamma^{(\im \theta)}$ is closed and sectorial.

To show that $Q_\gamma^{(\im \theta)}$ is closed and sectorial, note that
$$
\Re Q_\gamma^{(\im \theta)}(\psi) = \cos (2 \theta) \left \{ \norm{\partial_x \psi}_{L^2}^2 +   \norm{x \psi}_{L^2}^2 \right \} - \gamma \int_{-\infty}^\infty \Im  \frac{1}{1+e^{\im \kappa \theta} |x|^\kappa} \abs{\psi(x)}^2 \di x .$$
Thus, since $\cos (\kappa \theta) >0$ for $|\theta| <\alpha_\kappa$, we have
\begin{equation}\label{eq:ulbounds} \cos(2\theta) Q_0 (\psi) - |\gamma| \norm{\psi}_{L^2}^2 \le \Re Q_\gamma^{(\im \theta)}(\psi) \le \cos (2\theta)Q_0(\psi)+|\gamma| \norm{\psi}_{L^2}^2,\end{equation}
where
$$Q_0(\psi)=\norm{\partial_x \psi}_{L^2}^2 +   \norm{x \psi}_{L^2}^2  .$$  
Now
$$\Im Q_\gamma^{(\im \theta)}(\psi) = \sin (2 \theta) \left \{ - \norm{\partial_x \psi}_{L^2}^2 +   \norm{x \psi}_{L^2}^2 \right \} + \gamma \int_{-\infty}^\infty \Re \frac{1}{1+e^{\im \kappa \theta} |x|^\kappa}\abs{\psi(x)}^2 \di x .$$
Thus, by \eqref{eq:ulbounds},
$$\abs{\Im Q_\gamma^{(\im \theta)}(\psi)} \le \sin(2\theta) Q_0(\psi) +|\gamma| \norm{\psi}_{L^2}^2\le \tan(2\theta) \Re Q_\gamma^{(\im \theta)}(\psi) +|\gamma|  \left ( 1 +\frac{1}{\cos(2\theta)} \right ) \norm{\psi}_{L^2}^2,$$
which is to say $Q_\gamma^{(\im \theta)}$ is sectorial.  Furthermore, since
$Q_0$ is a closed form with domain $\cu{H}_1$ it follows from \eqref{eq:ulbounds} that $\Re Q_\gamma^{(\im \theta)}$ and hence $Q_\gamma^{(\im \theta)}$ is closed on $\cu{H}_1$. 
\end{proof}

\subsection{Analysis of $\Re L_\gamma^{(\im \theta)}$ }  
Since $L_\gamma$ and $L_\gamma^{(\im \theta)}$ are isospectral, if $\lambda$ is an eigenvalue of $L_\gamma$ then
\begin{equation*}
\mathrm{Re} \lambda \ \ge \  \cos(2 \theta) \min \sigma(H_{\gamma}(\theta) ) ,
\end{equation*}
where $H_{\gamma}(\theta)$ is the real part of $\frac{1}{ \cos(2 \theta)}
L_\gamma^ {(\im \theta)}$. That is,
\begin{equation*}
H_\gamma(\theta) \ = \  -  \partial_x^2 +  x^2 + \gamma
\frac{\sin(\kappa \theta)}{\cos(2 \theta)} \frac{|x|^\kappa }{ |1 + \e^{ \im \kappa
\theta} |x|^{\kappa} |^{2}} .
\end{equation*}
Note that $H_\gamma(\theta)=H_{-\gamma}(-\theta)$.  Hence without loss of generality we may take $\gamma >0$.
For the rest of the proof, let us fix some $\theta \in (0,\alpha_{\kappa})$. Theorem \ref{thm:main} will follow once we show that the ground state eigenvalue $\lambda_0 =\lambda_0(\theta)$ of $H_\gamma(\theta)$ satisfies an estimate
\begin{equation}\label{eq:asymptoticmain}\lambda_0 \ge C \gamma^{\frac{2}{\kappa+2}}.
\end{equation}
(In the end one could try to optimize over the choice of $\theta$, however  this would affect only the proportionally constant not the rate of divergence of the eigenvalue.)

The Schr\"odinger operator $H_\gamma(\theta)$ is of the form
$$-\partial_x ^2 + V(x),$$
where 
$$V(x)= x^2 + \gamma \frac{\sin(\kappa \theta) }{\cos(2\theta)}\frac{ |x|^\kappa}{ |1 + \e^{ \im \kappa
\theta} |x|^{\kappa} |^{2}} .$$
Thus $V(x)$ has a global minimum at $x=0$, with 
$$V(x)= \gamma \frac{ \sin(k\theta) }{\cos (2 \theta)} |x|^\kappa + x^2 + o(|x|^\kappa), \quad |x| \rightarrow 0.$$ Eq.\ \eqref{eq:asymptoticmain}
follows if we can show that the ground state eigenvalue of $H_\gamma(\theta)$ is asymptotic to the ground state eigenvalue of the anharmonic oscillator
\begin{equation} K_\alpha = -\partial_x^2 +\alpha |x|^\kappa \label{eq:anharmonic}\end{equation}
with $\alpha = \gamma \frac{\sin(k\theta)}{\cos(2\theta)}$.  Indeed, scaling shows that the ground state of $K_\alpha$ satisfies
$\lambda_0(K_\alpha)= \lambda_{0}(K_{1}) \alpha^{\frac{2}{2+\kappa}}$.

Let $\Phi $ be a fixed  $C^{2}$ ``cut-off'' function, with
\begin{enumerate}
\item $0\le\Phi(x)\le1$ for all $x$,
\item $\Phi(x)=1$ for $|x|\le 1$,
\item $\Phi(x) =0$ for $|x| \ge 2$.
\end{enumerate}
Fix $\nu > 0 $, to be chosen below, and let $$J(x) = \Phi(
\alpha^\nu x), \quad \text{with }\alpha = \gamma \frac{\sin(k\theta)}{\cos(2\theta)}.$$
Define $\widetilde J(x) $ so that $\widetilde J(x)^2 + J(x)^2 = 1$, that is $\widetilde J(x) =  \sqrt{1 -
J(x)^2}$.  By the IMS localization formula (see \cite[Theorem 3.2]{CFKS}),
\begin{equation}
H_\gamma(\theta) \ = \ \widetilde J H_\gamma(\theta) \widetilde J + J H_\gamma(\theta) J
- (\partial \widetilde J)^2 - (\partial J)^2. \label{eq:IMS}
\end{equation}

We will estimate each of the four operators on the right hand side of \eqref{eq:IMS} separately. To estimate $\widetilde J H(\theta) \widetilde J$, 
we simply drop the kinetic term $-\partial_x^2$, to obtain
\begin{equation*} \widetilde J H_\gamma(\theta) \widetilde J  \ \ge \
 \inf_{|x| >  \alpha^{-\nu}} \left [ x^2 + \alpha  \frac{ |x|^\kappa }{|1 + \e^{ \im \kappa \theta} |x|^{\kappa} |^{2}} \right ] \widetilde J^2 \
\ge \  \inf_{y > \alpha^{-\kappa \nu}}    \left [ y^{2/\kappa} +
\alpha  \frac{y}{(1+ y)^2} \right ]
\widetilde J^2.
\end{equation*}
To minimize $g(y)=y^{2/\kappa} +
\alpha  \frac{y}{(1+ y)^2}$ over $y  > \alpha^{-\kappa \nu}$ we must compare
$$ g(\alpha^{-\kappa \nu})=\alpha^{-2 \nu} +\alpha \frac{\alpha^{-\kappa \nu}}{(1+\alpha^{-\kappa \nu})^{2}} \sim C \alpha^{1-\kappa \nu}$$
with the value of $g$ at the stationary points, which solve 
$$\frac{2}{\kappa \alpha} y^{\frac{2}{\kappa}-1}(1 + y)^{3}+ 1= y.$$
There are two solutions to this equation, $y_{1} \sim 1$ and $y_{2} \sim C \alpha^{\frac{\kappa}{2+\kappa}}$
as $\alpha \rightarrow \infty$.  Evaluating $g$ at these points we obtain
$$ g(y_{1})\sim \frac{1}{2} \alpha \quad \text{and} \quad g(y_{2}) \sim C \alpha^{\frac{2}{2+\kappa}}.$$
 Thus, provided we take $\nu \le \frac{2}{2+\kappa}$, we conclude that
$$ \widetilde J H_{\gamma}(\theta) \widetilde J \ \ge \ C  \alpha^{\frac{2}{2+\kappa}} \widetilde J^2$$
as $\alpha \rightarrow \infty$.

To estimate $J H_{\gamma}(\theta) J$ we compare to the anharmonic
oscillator \eqref{eq:anharmonic}:
\begin{multline*} JH_{\gamma}(\theta) J - JK_{\alpha} J \  =  \ \left [ x^{2} + \alpha |x|^{\kappa} \left [ \frac{1}{|1+e^{\im \kappa \theta}|x|^{\kappa}|^{2}} -1 \right ] \right ] J^{2} \\
\ge -\alpha \sup_{ |x| \le  2 \alpha^{-\nu}}  \left [ |x|^{\kappa} \left [ 1 - \frac{1}{|1+|x|^{\kappa}|^{2}} \right ] \right ] J^2 \
\sim \ -C \alpha^{1-2 \kappa \nu} J^{2}
\end{multline*}
as $\alpha \rightarrow \infty$.  Putting this together with the estimate for $\widetilde J H_{\gamma}(\theta) \widetilde \gamma$  and the estimate $K_{\alpha} \ge C \alpha^{\frac{2}{2+\kappa}}$, we see that
\begin{multline*}
H_{\gamma}(\theta) \ = \  J K_{\alpha}J + J (H_{\gamma}(\theta) - K_{\alpha}) J + \widetilde J H_{\gamma} (\theta) \widetilde J - \left ( \partial_{x} J \right )^{2} - \left ( \partial _{x}\widetilde J \right )^{2} \\ 
\ge \ C \alpha^{\frac{2}{2+\kappa}} J^{2} + C \alpha^{1-2\kappa \nu} J^{2} + C \alpha^{\frac{2}{2+\kappa}} \widetilde J^{2} - C\alpha^{2\nu} \\
\ge \ C \alpha^{\frac{2}{2+\kappa}} - \left \{ C \alpha^{1-2\kappa \nu} J^{2} +C\alpha^{2 \nu} \right \},
\end{multline*}
since $|\partial_{x} J|, \ |\partial_{x}\widetilde J| \le C \alpha^\nu$.  To minimize the growth of the error term (in brackets), let us choose $\nu$ so that the two terms are of equal magnitude, namely $\nu= \frac{1}{2 + 2 \kappa}$, yielding
$$
H_{\gamma}(\theta) \ \ge \ C \gamma^{\frac{2}{2+\kappa}} - C \gamma^{\frac{1}{1+\kappa}},$$
and completing the proof. \qed

\subsection*{Acknowledgments}  I would like to express my gratitude for the hospitality of Tom Spencer and the Institute for Advanced Study, where I was a member from 2005-2007.  I learned of the problem considered here from Cedric Villani during a very pleasant discussion with him and Tom at the IAS in 2007.   I benefited also from discussions with Andrej Zlatos, who pointed out the relevance of Theorem 1.4 of \cite{CKRZ}.  The final version of this manuscript was prepared at the Centre Interfacultaire Bernoulli of the Ecole Polytechnic Federal de Lausanne during a visit in connection with the program Spectral and Dynamical Properties of Quantum Hamiltonians. 
This work supported in part by NSF CAREER Award DMS-08446325.


\end{document}